\documentclass[11pt,a4paper]{article}

\usepackage[a4paper,margin=2.15cm]{geometry}
\usepackage{setspace}
\usepackage{amsmath,amssymb,amsthm,mathtools}
\numberwithin{equation}{section}
\usepackage{bm}
\usepackage{tensor}
\usepackage{slashed}
\usepackage{cancel}

\usepackage{booktabs}
\usepackage{array}
\usepackage[dvipsnames]{xcolor}
\usepackage{graphicx}
\usepackage{caption}
\usepackage{subcaption}
\usepackage{tabularx}
\usepackage{booktabs}

\usepackage{comment}

\usepackage[colorlinks=true,linkcolor=blue,citecolor=blue,urlcolor=blue]{hyperref}

\usepackage[numbers,sort&compress]{natbib}

\newcolumntype{L}[1]{>{\raggedright\arraybackslash}p{#1}}
\newcolumntype{C}[1]{>{\centering\arraybackslash}p{#1}}
\newcolumntype{Y}{>{\raggedright\arraybackslash}X}

\newtheorem{theorem}{Theorem}

\newtheorem{prop}{Proposition}
\newtheorem{lemma}{Lemma}
\newtheorem*{example}{Example}

\theoremstyle{definition}
\newtheorem{definition}{Definition}

\theoremstyle{remark}
\newtheorem{remark}{Remark}

\title{\bf On the Mathematical Foundations of\\
Gravitational Decoupling in General Relativity}

\author{Andrea Giusti\thanks{E-mail: andrea.giusti9@unibo.it}
\\
\\
{\em DIFA, University of Bologna, I-40126 Bologna, Italy}
\\
\\
{\em I.N.F.N., Sezione di Bologna, I.S.~FLAG, I-40127 Bologna, Italy}
\\
\\
{\em Alma Mater Research Center on Applied Mathematics $({\cal AM}^2)$,}\\ 
{\em University of Bologna, I-40123 Bologna, Italy}
}

\date{ }

\begin{document}

\maketitle

\begin{abstract}
\noindent We study the algebraic and geometric structures of gravitational decoupling within a fixed class of globally static, spherically symmetric spacetimes with non-vanishing areal-radius differential. We show that (minimal and extended) geometric deformations of the seed metric define a groupoid of partial transformations. We also draw a connection between the geometric deformation procedures and disformal transformations. It is shown that every admissible minimal geometric deformation admits a disformal representation using the areal radius as an
auxiliary scalar. For extended deformations, we derive a necessary condition and exhibit a global obstruction that persists under diffeomorphisms preserving the areal radius and the static Killing direction up to constant normalization. Finally, we verify that the disformal construction reproduces the standard minimal geometric deformation source equations and separate conservation law, including anisotropic seeds.
\end{abstract}

\newpage

% ==========================================
\section{Introduction}
\label{sec:intro}
% ==========================================

Gravitational decoupling \cite{Ovalle:2020fuo} is a set of techniques used to generate new exact analytical solutions to the Einstein field equations sourced by non-trivial energy-momentum tensors. This approach was originally proposed by J. Ovalle in the context of brane-world cosmology \cite{MGDog} and was later developed within the framework of General Relativity \cite{MGD}, specifically by means of the so-called Minimal Geometric Deformation (MGD) approach.

Ovalle's proposal was as ingenious as it was simple. General Relativity, due to its strong non-linear nature, does not satisfy the superposition principle. In other words, a new solution to the Einstein field equations cannot, in general, be constructed as a linear combination of known solutions. If we consider, for instance, a perfect fluid as a source of the Einstein field equations, assuming staticity and spherical symmetry, then the field equations reduce to the Tolman--Oppenheimer--Volkoff system once an appropriate matter closure and initial or boundary data are specified \cite{Tolman:1939jz,Oppenheimer:1939ne}. However, if we then attempt to add a secondary source to the perfect fluid (with its own energy momentum tensor) the equations become overwhelmingly difficult to solve analytically, and one cannot simply solve separately the system independently for the two sources and linearly combine the solutions. Gravitational decoupling, {\em via} MGD or its extensions, circumvents this problem by providing a systematic procedure to split the full systems of field equations into two simpler (often decoupled, specifically in the case of MGD) sectors.

Let us be more specific and summarize the general machineries as they are currently presented in the literature. To this end we rely on the foundational works by J. Ovalle \cite{MGD} and R. Casadio, {\em et al.} \cite{EGD}. The MGD approach, assuming staticity and spherical symmetry, starts off with a {\em seed geometry}, sourced by a perfect fluid, upon which a secondary, more complex, matter source is added. The main assumption of the method is that this new matter only affects the spatial part of the spacetime geometry (by the addition of a deformation function), specifically stretching or compressing radial distances, while leaving the time component unaltered. The Einstein field equations then split into two distinct sectors, specifically one system of equation for the seed matter and a second system for the newly added source. Notably, since the temporal component of the spacetime metric is left unchanged by the procedure, the standard hydrostatic balance of the seed matter is preserved. Hence, the stress-energy tensors of the seed and newly added matter sources are separately covariantly conserved, meaning that there is no exchange of energy and momentum between the two sectors. However, this split comes at the cost of losing the closure of the resulting system of equations. Indeed, the so-called decoupled $\theta$-sector ({\em i.e.} the one associated with newly added matter source) has one degree of freedom remaining. Hence, one has to impose one additional constraint on the new matter to close the system. This constraint is typically motivated by phenomenological considerations specific to the astrophysical system under investigation.

While the MGD represents a simple and elegant approach, it is often restrictive when dealing with more complicated scenarios. A natural generalization of this approach, known as the  Extended Geometric Deformation (EGD), was proposed by R. Casadio, J. Ovalle, and R. da Rocha in \cite{EGD}. This modified method deforms both the spatial and temporal components of the spacetime metric simultaneously by means of two independent deformation functions. The extension of gravitational decoupling in the presence of matter and the associated exchange between sources were developed further in \cite{Ovalle:2018gic}. The idea underlying EGD later allowed to generalize the technique to axially symmetric systems \cite{Contreras:2021yxe}.

Despite the pragmatic success of the gravitational decoupling paradigm, the MGD and EGD are still predominantly regarded as solution-generating tricks within specific coordinate systems, while the fundamental mathematical structure of this approach remains mostly uncharted territory. The aim of this work is to take a first step towards filling this gap in the literature and to stimulate further research on this topic by taking advantage of the tools of mathematical relativity.

There are earlier mathematical investigations of transformations of spherical geometries. In particular, P.~Boonserm, M.~Visser, and S.~Weinfurtner \cite{Boonserm:2005ni} studied solution-generating maps preserving the perfect-fluid condition, including their composition and inversion. Disformal maps of spherical geometries and the algebraic properties of invertible disformal transformations have also been studied in \cite{Faraoni:2021gdl,Takahashi:2021ttd}.

In this work we isolate the geometric content of gravitational decoupling within a specified standard-static, spherically symmetric class. We show that admissible additive deformations form a groupoid of partial transformations and compare them with scalar disformal transformations.

The manuscript is organized as follows. Section~2 develops the necessary mathematical background for the subsequent algebraic analysis. Section~3 presents the groupoid structure of admissible geometric deformations. Section~4 investigates their relation to disformal transformations and establishes an obstruction to a global disformal representation of extended geometric deformations. Section~5 recovers the standard minimal geometric deformation approach as a particular disformal transformation, choosing the areal radius as the scalar field defining the map.

Throughout the paper, we use units in which $8\pi G=c=1$ and the standard general-relativistic notation of~\cite{Wald:1984rg} and~\cite{Chrusciel:2019lxd}.

% ==========================================
\section{Preliminaries}
\label{sec:pre}
% ==========================================

In this work we restrict our interest to the original scenario in which the MGD and EGD approaches were originally proposed, {\em i.e.} to static, spherically symmetric spacetimes. 

Let $(M,g)$ be a four-dimensional time-orientable connected Lorentzian manifold with metric $g$. This pair is customarily referred to as spacetime. A spacetime is then said to be {\em spherically symmetric} if it admits an effective isometric action of ${\rm SO} (3)$ such that the principal orbits are diffeomorphic to the two-sphere $\mathbb{S}^2$ (see e.g. \cite{Hawking:1973uf,Szenthe:2004a,Szenthe:2004b,Szenthe:2007}). On the principal-orbit region, such a spacetime is locally isometric to a warped product $({Q}, g_Q) \times_r (\mathbb{S}^2, g_{\mathbb{S}^2})$ where $(Q,g_Q)$ is a two-dimensional Lorentzian manifold, $g_{\mathbb{S}^2}$ is the standard unit-round metric on $\mathbb{S}^2$, and $r:Q\to\mathbb{R}^+$ is a smooth function denoting the {\em areal radius}. In other words, the principal-orbit region of $M$ is locally diffeomorphic to $Q \times \mathbb{S}^2$ and the metric reads $g = g_Q + r^2 \, g_{\mathbb{S}^2}$ (see \cite{Kunzle:1967,Wald:1984rg,Clarke:1987, Szenthe:2004a}). At a regular centre, the spherical fibre collapses and the orbit space has a boundary. That said, in this work we consider only the principal-orbit region.

A spacetime is said to be {\em static} if it admits a timelike Killing vector field that is hypersurface-orthogonal (see \cite{Wald:1984rg,Chrusciel:2020fql}). Interestingly, it is possible to show (see {\em e.g.} \cite{Wald:1984rg}) that if a spacetime $(M,g)$ is both {\em static and spherically symmetric}, then it is again locally isometric to a warped product $({Q}, g_Q) \times_r (\mathbb{S}^2, g_{\mathbb{S}^2})$, but the base manifold $Q$ can be further characterized. Specifically, let $K$ be the timelike Killing vector field that is hypersurface-orthogonal, if we choose one of the coordinates on $Q$ to be the parameter $t$ along the integral curves of $K$, and let $\rho$ a spatial coordinate on $Q$, then $K= \partial/\partial t$. Now, the requirement of hypersurface orthogonality implies, by Frobenius' theorem, that the distribution orthogonal to $K$ is integrable. Consequently, one can choose a second coordinate $\rho$ on $Q$ such that the metric contains no mixed $(t,\rho)$ terms. Moreover, the Killing property, together with spherical symmetry, implies that the metric components are independent of $t$ and of the angular coordinates. Thus, adopting the standard spherical chart on $\mathbb{S}^2$ we have that the warped product decomposition of the metric $g$ admits the following form
\begin{equation}
    g =  -\tilde{A}(\rho) \, {\rm d}t \otimes {\rm d}t + \tilde{B}(\rho) \, {\rm d}\rho \otimes {\rm d}\rho + r(\rho)^2 \left( {\rm d}\theta \otimes {\rm d}\theta + \sin^2\theta \; {\rm d}\varphi \otimes {\rm d}\varphi \right)
\end{equation}
with $\tilde{A}(\rho) > 0$ for staticity and $(\theta, \varphi) \in (0, \pi) \times (0, 2\pi)$.

Let $\nabla$ denote the Levi-Civita connection associated with the metric $g$. We can then observe that if $r'(\rho) \neq 0$ on $Q$, then  by the inverse function theorem we can use the areal radius as a valid local coordinate on $Q$, and hence rewrite the coordinate representation of the metric as 
\begin{equation}
\label{eq:sphcoord}
    g =  -{A}(r) \, {\rm d}t \otimes {\rm d}t + {B}(r) \, {\rm d}r \otimes {\rm d}r + r^2 \left( {\rm d}\theta \otimes {\rm d}\theta + \sin^2\theta \; {\rm d}\varphi \otimes {\rm d}\varphi \right)
\end{equation}
with $A(r) > 0$. Note that this case excludes 
scenarios like wormhole throats and direct-product spacetime ({\em e.g.} the Bertotti–Robinson spacetime $({\rm AdS}_2 \times \mathbb{S}^2, g_{{\rm AdS}_2} + L^2 \, g_{\mathbb{S}^2})$ with $L = $ constant, see \cite{Bertotti:1959pf,Robinson:1959ev}). In both cases we either have isolated points where the areal radius reaches a local extremum or entire regions where the areal radius is constant ({\em e.g.} the Bertotti–Robinson spacetime has $r = L = $ constant). 

Although the problem of extending the gravitational decoupling paradigm to wormholes has attracted some attention recently \cite{Tello-Ortiz:2026rvh}, we will limit our discussion to the original setup of both the MGD and EGD, thus excluding these scenarios and assuming the areal radius to be a valid local coordinate on $Q$.

From this moment on we assume ${\rm d} r \neq 0$ on $Q$ and that $Q \cong \mathbb{R} \times I$ with $I \subseteq \mathbb{R}^+$ an open interval ({\em i.e.}, $M$ is globally trivial bundle), which is often the case of interest for physical applications.
\begin{definition}
We denote by $\mathcal{M}(M)$ the set of static, spherically symmetric Lorentzian metric tensors on $M$ admitting a global
canonical representation with $(t,r)\in\mathbb{R}\times I$,
where $I$ is the fixed range of the areal radius. Furthermore, we define $\mathcal{S} := \mathcal{M}(M)/{\rm Diff} (M)$, {\em i.e.} the set of equivalence classes
$$
[g] = \{ \tilde{g} \in \mathcal{M}(M) \,\, | \,\, \exists \phi \in {\rm Diff}(M) \,\, : \,\, \tilde{g} = \phi^* g \} \, ,
$$
where $\phi^* g$ denotes the pullback  of $g$ under the diffeomorphism $\phi$. An element $g_{\rm c} \in [g]$ such that it takes a ``standard form'' 
$$
g_{\rm c} =  -{A}(r) \, {\rm d}t \otimes {\rm d}t + {B}(r) \, {\rm d}r \otimes {\rm d}r + r^2 \, g_{\mathbb{S}^2} \, ,
$$
with respect to the global coordinates $(t, r)$ on $Q = \mathbb{R} \times I$, is called a {\em canonical representative} of the class $[g]$.
\end{definition}

\begin{remark}
    We assume that every class in $\mathcal{S}$ admits a global
    canonical representative. This existence is not guaranteed
    by staticity, spherical symmetry, monotonicity of the areal
    radius, and the global triviality of $M$ as a bundle alone (see {\em e.g.}
    \cite{Sanchez:2007tc}).
    Furthermore, with these specifications the functions $A,B$
    appearing in the standard form of the canonical representative
    are strictly positive.
\end{remark}

Let $C^\infty_+(I) := \{ f \in C^\infty(I, \mathbb{R}) \,\, | \,\, f(x) > 0 \, , \,\,\, \forall x \in I \}$, then we define the equivalence relation on $C^\infty_+(I)$ such that
$$
f \sim g \quad \iff \quad
 \exists\, c \in \mathbb{R}^+ \,\,:\,\, g(x) = c \, f(x) \, , \quad \forall x \in I \, .
$$
We can then introduce the quotient space 
$$
\mathcal{C} \equiv C^\infty_+(I) / \mathbb{R}^+ := \left\{ [f] \,\, | \,\, f \in C^\infty_+(I) \right\} \, ,
$$ 
with $[f] = \{ c \, f \mid c \in \mathbb{R}^+ \}$.

\begin{definition}
We define the map $\Lambda : \mathcal{S} \to \mathcal{C} \times C^\infty_+(I)$ such that
$\forall [g] \in \mathcal{S}$ it is $\Lambda ([g]) = ([A], B)$, with $A,B \in C^\infty_+(I)$ being the functions appearing in the  coordinate representation of a canonical representative of $[g]$.
\end{definition}

\begin{remark}
The quotient $\mathcal{C}$ is required in order to account for time rescalings, {\em i.e.} $t \mapsto c \, t$ with $c > 0$.   
\end{remark}

\begin{prop}
The map $\Lambda : \mathcal{S} \to \mathcal{C} \times C^\infty_+(I)$ is well defined and bijective.
\end{prop}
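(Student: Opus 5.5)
We need to prove three things: that $\Lambda$ is well defined (the pair $([A],B)$ does not depend on which canonical representative of $[g]$ we pick), that it is injective, and that it is surjective. The plan is to handle each in turn, and the bulk of the work is in the first.

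\textbf{Well-definedness.} Take two canonical representatives $g_{\rm c}$ and $\tilde g_{\rm c}=\phi^*g_{\rm c}$ in the same class, with coefficient pairs $(A,B)$ and $(\tilde A,\tilde B)$. We must show $\tilde A=cA$ for some $c>0$ and $\tilde B=B$. We use the fact that $\phi$ is an isometry from $(M,\tilde g_{\rm c})$ to $(M,g_{\rm c})$, so it carries invariantly defined objects to invariantly defined objects.

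First, the areal radius is intrinsic. On the principal-orbit region it is fixed by the area of the $\mathrm{SO}(3)$ orbits. More invariantly, the orbits are the maximal-area round spheres through each point, and they are characterised by the isometry group acting on them. Hence $r\circ\phi=r$. Since both representatives use the same interval $I$, no reparametrisation of the radial coordinate is possible.

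Second, the static Killing field is unique up to a constant factor. Take the hypersurface-orthogonal timelike Killing field of $g_{\rm c}$ that commutes with the rotations. Its pullback is a Killing field of $\tilde g_{\rm c}$ of the same kind. So we must argue that $\partial_t$ is, up to a constant, the unique timelike hypersurface-orthogonal Killing field invariant under $\mathrm{SO}(3)$. This gives $\phi_*\partial_t=c'\partial_t$, and we fix the sign by time orientation.

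With both facts in hand, $\phi$ takes the form $(t,r,\omega)\mapsto(c't+t_0,\,r,\,R\omega)$ with $R\in\mathrm{SO}(3)$, up to $O(3)$. Reading off components gives $\tilde B(r)=B(r)$, since $\tilde g(\partial_r,\partial_r)$ is preserved once $r$ is preserved, and $\tilde A=c'^2A$.

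\textbf{Injectivity.} Suppose $\Lambda([g_1])=\Lambda([g_2])$. Then $B_1=B_2$ and $A_2=cA_1$. The diffeomorphism $t\mapsto\sqrt{c}\,t$, identity on the other coordinates, pulls back $g_{{\rm c},2}$ to $g_{{\rm c},1}$. This map is a global diffeomorphism because $t$ ranges over all of $\mathbb{R}$, so $[g_1]=[g_2]$.

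\textbf{Surjectivity.} Given $([A],B)$, choose any representative $A$ and define $g$ directly by the standard form on $\mathbb{R}\times I\times\mathbb{S}^2$. It is a smooth Lorentzian, static, spherically symmetric metric, and it is already canonical. Its image under $\Lambda$ is $([A],B)$.

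\textbf{Main obstacle.} The hard step is the uniqueness of the static Killing direction in well-definedness. Degenerate cases such as Minkowski or de Sitter patches have extra Killing fields, for instance boosts, which are timelike on some regions. The plan is to restrict to Killing fields that commute with the $\mathrm{SO}(3)$ action and are tangent to $Q$, since $\phi$ can be assumed to intertwine the symmetry after composing with a rotation. Such a field is a Killing field of the two-dimensional warped base $(Q,g_Q)$ that also preserves $r$, so it satisfies $K(r)=0$. Because ${\rm d}r\neq0$, this forces $K\propto\partial_t$ pointwise. The Killing equation then makes the proportionality factor constant.

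If the conjugation argument for $\mathrm{SO}(3)$ proves delicate, for instance for maximally symmetric metrics, the fallback is to note that we only need the invariants $B(r)=g(\nabla r,\nabla r)^{-1}$ and $A$ up to scale. $B$ is manifestly diffeomorphism invariant as a function of $r$. The ratio $A(r)/A(r_0)$ follows from the norm of any $\mathrm{SO}(3)$-commuting hypersurface-orthogonal Killing field annihilating $r$, and the argument above shows all such fields are constant multiples of one another.
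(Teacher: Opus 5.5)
There is a genuine gap in the first step of well-definedness. You assert $r\circ\phi=r$ for the given isometry $\phi$, but the areal radius is not determined by the metric alone: it is defined through the chosen canonical rotation group $H_i\subset{\rm Isom}_0(M,g_i)$. The isometry $\phi$ (with $\phi^*g_1=g_2$) carries the canonical spheres of $g_2$ onto the orbits of $\phi H_2\phi^{-1}$. This is a compact subgroup of ${\rm Isom}_0(M,g_1)$ isomorphic to ${\rm SO}(3)$, and nothing in your argument shows it equals $H_1$. For constant-curvature data the local Killing algebra contains many copies of $\mathfrak{so}(3)$ with different orbits, so this must be handled, not assumed.

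Your characterisation as ``maximal-area round spheres through each point'' is false if taken literally: already for $A=B=1$ there are round spheres in $M$ through a given point with arbitrarily large area. ``Characterised by the isometry group'' is exactly the statement that needs proof. Your repair, composing with a rotation, cannot work. For $k\in H_1$ one has $(k\phi)H_2(k\phi)^{-1}=k(\phi H_2\phi^{-1})k^{-1}$, which equals $H_1$ if and only if $\phi H_2\phi^{-1}$ already does; composing on the right with an element of $H_2$ changes nothing either. Your fallback is circular: $B=g(\nabla r,\nabla r)^{-1}$ is invariant ``as a function of $r$'' only once $r_1\circ\phi=r_2$ is known.

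The missing idea is the content of Lemma~\ref{lemmaccio}. You do not need the given $\phi$ to preserve $r$, only some isometry $\Phi=\chi\circ\phi$ with $\chi\in{\rm Isom}_0(M,g_1)$. The paper first proves that each $H_i$ is a maximal compact subgroup of ${\rm Isom}_0(M,g_i)$. The orbits of a compact $L\supseteq H_i$ are compact and spacelike at a maximum of $t$. By the projection argument they are two-dimensional, hence equal to the canonical spheres, and the mixed metric components then force $L=H_i$. Conjugacy of maximal compact subgroups in a connected Lie group then gives $\chi$ with $\chi\phi H_2\phi^{-1}\chi^{-1}=H_1$. Only after this does area preservation yield $r_1\circ\Phi=r_2$.

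With that in place, the step you call the main obstacle is the easy one, and your treatment of it is correct. An $H_1$-commuting Killing field annihilating $r$ has the form $\lambda\,\partial_t$, and $\mathcal{L}_{\lambda\partial_t}g=-A({\rm d}\lambda\otimes{\rm d}t+{\rm d}t\otimes{\rm d}\lambda)$ forces $\lambda$ to be constant. To get $B_2=B_1$, use the invariance of $|{\rm d}r|^2$ rather than ``$\tilde g(\partial_r,\partial_r)$ is preserved'', since $\Phi_*\partial_r$ may a priori have $\partial_t$ and angular components. This yields $B_2=B_1$ and $A_2=c^2A_1$ somewhat more invariantly than the paper's coordinate computation. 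Your injectivity and surjectivity arguments coincide with the paper's.
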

\begin{proof}
Let us recall that we are assuming that ${\rm d} r \neq 0$ on $Q = \mathbb{R} \times I$ and $M = \mathbb{R} \times I \times \mathbb{S}^2$.

\noindent {\em $\Lambda$ is well defined:} let $[g] \in \mathcal{S}$, we have to show that if $g_1,g_2 \in [g]$ then $([A_1],B_1)=([A_2],B_2)$. Since $g_1,g_2 \in [g]$ then $\exists \phi \in {\rm Diff} (M)$ such that $\phi^\ast g_1 = g_2$. Furthermore, since $g_1,g_2 \in \mathcal{M}(M)$, they admit coordinate representations of the form
$$
g_i = -A_i(r_i) {\rm d} t_i \otimes {\rm d} t_i  + B_i(r_i) {\rm d} r_i \otimes {\rm d} r_i + r_i^2 \, g_{\mathbb{S}^2} \quad \text{for } i \in \{1,2\} \, .$$
By Lemma~\ref{lemmaccio} (in Appendix~\ref{AppA}), there exists an isometry $\Phi : (M,g_2)\to(M,g_1)$ mapping canonical spheres onto canonical spheres.
Preservation of their areas implies $r_1\circ\Phi=r_2$.
By the mixed-component argument used in the proof of the lemma,
and composing $\Phi$ with a spherical reflection if necessary,
we may write $\Phi(t_2,r_2,\omega)=\big(T(t_2,r_2),r_2,R \,\omega\big)$, with $\omega \in \mathbb{S}^2$ and $R \in {\rm SO} (3)$, and $r_1\circ\Phi=r_2$, where $R\in {\rm SO}(3)$ is constant. We have that $R^\ast \, g_{\mathbb{S}^2} = g_{\mathbb{S}^2}$, computing $\Phi^*g_1$, using ${\rm d}T= \partial_{t_2} T\,{\rm d}t_2+\partial_{r_2} T\,{\rm d}r_2$ we have that
\begin{equation*}
\begin{split}
\Phi^*g_1 &=-A_1(r_2)\bigl(\partial_{t_2} T\,{\rm d}t_2+\partial_{r_2} T\,{\rm d}r_2\bigr)^{\otimes 2} +B_1(r_2)\,{\rm d}r_2 \otimes {\rm d}r_2 +r_2^2\, g_{\mathbb{S}^2}\\
&=-A_1(r_2) (\partial_{t_2} T)^2\,{\rm d}t_2 \otimes {\rm d}t_2
  -A_1(r_2) \, \partial_{t_2}T \, \partial_{r_2}T\,({\rm d}t_2 \otimes {\rm d}r_2 + {\rm d}r_2 \otimes {\rm d}t_2)\\
&\quad+\bigl(B_1(r_2)-A_1(r_2)(\partial_{r_2} T)^2\bigr)\,{\rm d}r_2 \otimes {\rm d}r_2+r_2^2 \, g_{\mathbb{S}^2}.
\end{split}
\end{equation*}
which, comparing with $g_2$ and imposing $\Phi^*g_1=g_2$, yields $A_1(r_2)(\partial_{t_2}T)^2=A_2(r_2)$, $\partial_{r_2}T=0$, and $B_1(r_2)-A_1(r_2)(\partial_{r_2} T)^2 = B_2 (r_2)$. Hence $T=T(t_2)$. Furthermore, 
$(\partial_{t_2}T)^2= A_2(r_2)/A_1(r_2)$, and since the two sides depend on independent variables, both are constant. We therefore infer that $T(t_2)=ct_2+t_0$
with $c\neq 0$. Finally, combining all the results gives $B_2=B_1$, while $A_2=c^2A_1$. Thus $([A_1],B_1)=([A_2],B_2)$, proving that $\Lambda$ is well defined.

\noindent {\em $\Lambda$ is injective:} let $[g_1], [g_2] \in \mathcal{S}$ such that $\Lambda([g_1]) = \Lambda([g_2])$. Since $g_1,g_2 \in \mathcal{M}(M)$, they admit coordinate representations of the form
$$
g_i = -A_i(r) {\rm d} t_i \otimes {\rm d} t_i  + B_i(r) {\rm d} r \otimes {\rm d} r + r^2 \, g_{\mathbb{S}^2} \quad \text{for } i \in \{1,2\} \, .$$
By definition $\Lambda([g_1]) = \Lambda([g_2])$ implies  $([A_1], B_1) = ([A_2], B_2)$ from which it follows that $B_1 = B_2 \equiv B$ and $A_2 = c A_1$ for some constant $c > 0$. Now, the map $\phi : M \to M$ such that $\phi (t_2,r, \theta, \varphi) = (\sqrt{c} \, t_2, \, r, \theta, \varphi)$ is a diffeomorphism such that
$$
\phi^* g_1 = -A_1(r) {\rm d} (\sqrt{c} \, t_2) \otimes {\rm d} (\sqrt{c} \, t_2)  + B(r) {\rm d} r \otimes {\rm d} r + r^2 \, g_{\mathbb{S}^2} = g_2 \, .
$$
Hence $g_1 \sim g_2$, which implies $[g_1] = [g_2]$. 

\noindent {\em $\Lambda$ is surjective:} let $([A], B) \in \mathcal{C} \times C^\infty_+(I)$. Selecting a representative $A(r) \in [A]$ we can define a metric tensor $g$ on $M$ as in \eqref{eq:sphcoord}. We now have to prove that $g$ is static, spherically symmetric and ${\rm d}r \neq 0$. Since $A(r) > 0$, then the Killing vector $K=\partial_t$ is indeed timelike. Furthermore, it is easy to check that the one-form $\omega = -A(r)\mathrm{d}t$ satisfies the hypersurface-orthogonality condition $\omega \wedge {\rm d}\omega = 0$ (see {\em e.g.} \cite{Wald:1984rg}), thus proving staticity. By construction the metric \eqref{eq:sphcoord} is spherically symmetric, and hence so is $g$. Lastly, under our assumptions $r$ is a global coordinate and hence ${\rm d} r \neq 0$ everywhere. Therefore, $g \in \mathcal{M}(M)$. Evaluating the map on its equivalence class gives $\Lambda([g]) = ([A], B)$, proving surjectivity.
\end{proof}

Essentially, this bijection allows us to rigorously identify the moduli space (namely,  the abstract set of all physically distinct spacetime configurations) of static, spherically symmetric geometries on $M$ with two strictly positive smooth functions of the areal radius, thus parameterizing all geometric configurations modulo diffeomorphisms, free of coordinate ambiguities. This result is not conceptually new, however it is hardly ever framed rigorously in the literature.

From this moment on we will identify any $[g] \in \mathcal{S}$ with its unique functional representation $([A], B) \in \mathcal{C} \times C^\infty_+(I)$, and we will denote an element of $\mathcal{S}$ simply by this pair of functions.
\section{Algebraic properties of Geometric Deformations}
\label{Sec-Alg}
Let $M \cong \mathbb{R} \times I \times \mathbb{S}^2$ and let the areal radius $r$ be such that ${\rm d} r \neq 0$ on $M$.

We can now provide an operatorial characterisation of both the MGD and EGD procedures \cite{MGD,EGD}.

\begin{definition}\label{def:op}
Let $f,h \in C^\infty (I, \mathbb{R})$. We define the {\bf Casadio--Ovalle--daRocha (COdR) extended geometric deformation operator} as the {\em partial map} (see  Definition \ref{def-part})
$$
\mathcal{D}_{(f,h)} : \mathcal{S} \to \mathcal{S} \quad \mbox{such that} \quad \mathcal{D}_{(f,h)} ([g]) \equiv \mathcal{D}_{(f,h)} ([{\rm e}^{\xi }],\mu^{-1}) := ([{\rm e}^{\xi +f}],(\mu + h)^{-1}) \, ,
$$
where $\xi \in C^\infty (I,\mathbb{R})$ and $\mu \in C^\infty_+ (I)$ are such that $\Lambda ([g]) = ([{\rm e}^{\xi }],\mu^{-1})$. In particular, the restriction $\mathcal{D}_{(h)} :=\mathcal{D}_{(0,h)}$ will be referred to as {\bf Ovalle's minimal geometric deformation operator}.
\end{definition}

Some comments are now in order. First, we point out that Definition~\ref{def:op} provides a formalization of the MGD and EGD approaches in terms of operators on the moduli space of static, spherically symmetric geometries
on $M$. Second, in the standard formulation of MGD and EGD the functions $f,h$ typically appear multiplied by a real constant $\alpha$ \cite{MGD,EGD}. Its presence is not mathematically necessary, but rather it serves as a convenient bookkeeping parameter that makes the decoupling structure and the seed-configuration limit explicit. Hence, we will omit the parameter $\alpha$ in this study.

To make the relation to the MGD and EGD approaches more apparent, let us consider $g \in \mathcal{M}(M)$. Such a metric admits a canonical representative of the form
\begin{equation} \label{g-seed}
g =  - {\rm e}^{\xi(r)}\, {\rm d}t \otimes {\rm d}t + \frac{1}{\mu (r)} \, {\rm d}r \otimes {\rm d}r + r^2 \, g_{\mathbb{S}^2} \, ,
\end{equation}
which serves as a representative of the equivalence class $[g] \in \mathcal{S}$. The COdR operator maps this class to a unique class $[\hat{g}] \in \mathcal{S}$ with a canonical representative being
\begin{equation} \label{g-deformed}
\hat{g} =  - {\rm e}^{\xi(r) + f(r)}\, {\rm d}t \otimes {\rm d}t + \frac{1}{\mu (r) +h (r)} \, {\rm d}r \otimes {\rm d}r + r^2 \, g_{\mathbb{S}^2} \, .
\end{equation}
Note that, as mentioned above, in the standard lore of gravitational decoupling {\em via} geometric deformations one has $f \to \alpha f$ and $h \to \alpha h$ for some real bookkeeping parameter $\alpha$. Thus recovering the standard presentation in \cite{MGD,EGD}.

\begin{remark} \label{remark3}
The requirement that ${\rm Im} (\mathcal{D}_{(f,h)}) \subseteq \mathcal{S}$ implies a constraint on the possible choices of $f,h \in C^\infty (I, \mathbb{R})$. Specifically, it must be that $\hat{g} \in \mathcal{M}(M)$. Hence, $(f,h)$ must be chosen in a suitable subset $\mathcal{A}_g$ of $C^\infty (I, \mathbb{R}) \times C^\infty (I, \mathbb{R})$, which will depend on the pair $({\rm e}^{\xi}, \mu^{-1})$ determining the seed metric $g$. Specifically, let $\xi \in C^\infty (I,\mathbb{R})$, $\mu \in C^\infty_+ (I)$ and $[g] \equiv ([{\rm e}^{\xi }],\mu^{-1})$ ({\rm i.e.}, the representative of $[g]$ takes the form in \eqref{g-seed}), then 
$$
\mathcal{A}_{g} := \left\{ (f, h) \in C^\infty(I, \mathbb{R}) \times C^\infty(I, \mathbb{R}) \;\middle\vert{}\; \mu(r) + h(r) > 0, \; \forall r \in I \right\} \, .
$$
\end{remark}

The last remark prevents the set of admissible transformation to form a group. A traditional group requires a universal set of transformations that are globally closed under composition and inversion, regardless of the object they act on. Here instead we have a seed-dependent admissibility condition. However, we can still attempt at a algebraic characterization of geometric deformations by means of the theory of groupoids  \cite[Sec.~19.3]{Voight:2021}. For the reader's convenience we report from \cite[Sec.~19.3]{Voight:2021} the following definitions: 
\begin{definition} \label{def-part}
Let $X, Y$ be sets.
\begin{itemize}
    \item A {\em partial function} $f:X \to Y$ is a function defined on a subset of the domain $X$.
    \item A {\em groupoid} $(G, \ast, (\cdot)^{-1})$ is a set $G$ with a unary operation $(\cdot)^{-1} : G \to G$ and a partial function $\ast: G \times G \to G$ such that $\ast$ and $(\cdot)^{-1}$ satisfy the associativity, inverse, and identity properties whenever they are defined: (a) $\forall a,b,c \in G$ such that if $a\ast b$ and $b\ast c$ are defined, then the products $(a\ast b)\ast c$ and $a\ast(b\ast c)$ must exist and be equal; (b) $\forall a \in G$, $\exists a^{-1} \in G$ such that $a\ast a^{-1}$ and $a^{-1}\ast a$ are defined (but not necessarily equal); (c) $\forall a,b \in G$ such that $a\ast b$ is defined, we have
    $(a\ast b)\ast b^{-1} = a$ and $a^{-1}\ast (a\ast b) = b$.
\end{itemize}
\end{definition}
The algebraic structure of the geometric deformation procedure depends on three objects: the seed geometry, a pair of deformation functions, and a compatibility condition that guarantees the applicability of the algorithm. Hence, a possible algebraic object that keeps track of all these information is the set
$$
G = \left\{ ([g], f,h) \,\, | \,\, [g] \in \mathcal{S} \, , \,\, (f,h) \in \mathcal{A}_g \right\} \, ,
$$
where the order of appearance is meant to represent that the seed metric class $[g]$, is acted upon by the admissible deformation parametrized by the pair $(f,h)$. In a way, $G$ collects all possible valid metric deformations {\em via} EGD.

We now define the partial function
$\ast : G \times G \to G$, out of the composition law of the COdR operator. Specifically, we say that, given $a = ([g_a], f_a,h_a)$ and $b = ([g_b], f_b,h_b)$ in $G$, the quantity $a \ast b$ is {\em defined} if and only if $[g_b] = \mathcal{D}_{(f_a,h_a)}([g_a])$. In that case we define
$$
a \ast b = ([g_a], f_a + f_b,h_a+h_b) \, .
$$
Note that the admissibility condition is automatically satisfied since $[g_b] = \mathcal{D}_{(f_a,h_a)}([g_a])$ and $b \in G$. In other words, $a \ast b$ algebraically implements $(\mathcal{D}_{(f_b,h_b)} \circ \mathcal{D}_{(f_a,h_a)})([g_a]) = \mathcal{D}_{(f_a + f_b,h_a + h_b)}([g_a])$ keeping track of all the admissibility conditions.

Now, let us observe that if $([g],f,h) \in G$ then also $(\mathcal{D}_{(f,h)}([g]),-f,-h) \in G$, indeed we are essentially taking back the admissible deformation applied to $[g]$ through $\mathcal{D}_{(f,h)}$. Furthermore we have that
$$
([g],f,h) \ast (\mathcal{D}_{(f,h)}([g]),-f,-h) = ([g], f-f, h-h) = ([g], 0, 0) \, , 
$$
and also
$$
(\mathcal{D}_{(f,h)}([g]),-f,-h) \ast ([g],f,h) = (\mathcal{D}_{(f,h)}([g]), f-f, h-h) = (\mathcal{D}_{(f,h)}([g]), 0, 0) \, , 
$$
that indeed are defined, yet not necessarily equal. Therefore, we have a well-defined  unary operation $(\cdot)^{-1} : G \to G$ such that $\forall a = ([g],f,h) \in G$ we have that $a^{-1} = (\mathcal{D}_{(f,h)}([g]),-f,-h)$.

It is now possible to prove the following statement.
\begin{theorem}
    The triple $(G, \ast, (\cdot)^{-1})$ forms a groupoid.
\end{theorem}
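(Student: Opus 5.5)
The plan is to check the three axioms (a)--(c) of Definition~\ref{def-part} directly. The one structural fact I need first is that the COdR operator composes additively on the canonical data. For $[g]=([{\rm e}^{\xi}],\mu^{-1})$ and admissible $(f_a,h_a)$, Definition~\ref{def:op} gives $\mathcal{D}_{(f_a,h_a)}([g])=([{\rm e}^{\xi+f_a}],(\mu+h_a)^{-1})$. Applying $\mathcal{D}_{(f_b,h_b)}$ then gives $([{\rm e}^{\xi+f_a+f_b}],(\mu+h_a+h_b)^{-1})=\mathcal{D}_{(f_a+f_b,h_a+h_b)}([g])$. By the bijectivity of $\Lambda$ (the Proposition of Section~\ref{sec:pre}), this is an identity in $\mathcal{S}$. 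It does not depend on the choice of $\xi$ within $[{\rm e}^{\xi}]$: changing $\xi$ by an additive constant only rescales $A$ by a positive constant, which is invisible in $\mathcal{C}$. Admissibility of the sum also follows, since $\mu+h_a+h_b$ is exactly the $\mu$-datum of $[g_b]$ plus $h_b$, and this is positive because $b\in G$. So $a\ast b\in G$ whenever it is defined.

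For associativity (a), suppose $a\ast b$ and $b\ast c$ are defined, so $[g_b]=\mathcal{D}_a([g_a])$ and $[g_c]=\mathcal{D}_b([g_b])$, writing $\mathcal{D}_a$ for $\mathcal{D}_{(f_a,h_a)}$. Then $(a\ast b)\ast c$ is defined because, by the composition identity, $\mathcal{D}_{(f_a+f_b,h_a+h_b)}([g_a])=\mathcal{D}_b(\mathcal{D}_a([g_a]))=[g_c]$. Also $a\ast(b\ast c)$ is defined because $b\ast c$ has seed $[g_b]=\mathcal{D}_a([g_a])$. Both products equal $([g_a],f_a+f_b+f_c,h_a+h_b+h_c)$ by associativity of addition in $C^\infty(I,\mathbb{R})$.

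For (b), the inverse $a^{-1}=(\mathcal{D}_a([g_a]),-f_a,-h_a)$ is already in $G$: its $\mu$-datum after deformation is $\mu+h_a-h_a=\mu>0$. The text preceding the theorem has already verified that $a\ast a^{-1}$ and $a^{-1}\ast a$ are defined. For $a^{-1}\ast a$, one only needs $\mathcal{D}_{(-f_a,-h_a)}(\mathcal{D}_a([g_a]))=\mathcal{D}_{(0,0)}([g_a])=[g_a]$, which is again the composition identity.

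For (c), let $a\ast b$ be defined. Then $(a\ast b)\ast b^{-1}$ is defined, since $b^{-1}$ has seed $\mathcal{D}_b([g_b])=\mathcal{D}_b(\mathcal{D}_a([g_a]))=\mathcal{D}_{(f_a+f_b,h_a+h_b)}([g_a])$, which is exactly the image under $a\ast b$. The product is $([g_a],f_a,h_a)=a$. Likewise $a^{-1}\ast(a\ast b)$ requires the seed of $a\ast b$, namely $[g_a]$, to equal $\mathcal{D}_{(-f_a,-h_a)}(\mathcal{D}_a([g_a]))$, which holds. That product is $(\mathcal{D}_a([g_a]),f_b,h_b)=([g_b],f_b,h_b)=b$.

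The only step that needs real care is the first: showing that $\mathcal{D}_{(f,h)}$ is well defined on classes and composes additively. This is where the quotient by $\mathbb{R}^+$ in $\mathcal{C}$ and the bijectivity of $\Lambda$ are used. Once that identity is in place, every remaining check reduces to bookkeeping of the seed-matching condition and commutativity and associativity of addition of functions.
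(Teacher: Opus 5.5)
Your proposal is correct and takes essentially the same route as the paper: you verify associativity, inverses and cancellation directly, and reduce each to the additive composition $(\mathcal{D}_{(f_b,h_b)}\circ\mathcal{D}_{(f_a,h_a)})([g_a])=\mathcal{D}_{(f_a+f_b,h_a+h_b)}([g_a])$ plus bookkeeping of seed-matching and admissibility. You are more explicit than the paper in checking that $(a\ast b)\ast b^{-1}$ and $a^{-1}\ast(a\ast b)$ are actually defined, and that $\mathcal{D}_{(f,h)}$ does not depend on the choice of representative in $[{\rm e}^{\xi}]$.
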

\begin{proof}
The existence of the {\em inverse} has already been proven above. 

\noindent {\em Associativity}: let $a_1,a_2,a_3 \in G$ with $a_i = ([g_i],f_i,h_i)$; assuming that $a_1\ast a_2$ and $a_2\ast a_3$ are defined implies that $[g_2] = \mathcal{D}_{(f_1,h_1)}([g_1])$ and
$[g_3] = \mathcal{D}_{(f_2,h_2)}([g_2])$, hence the admissibility conditions of $(a_1\ast a_2) \ast a_3$ and $a_1\ast (a_2 \ast a_3)$ are automatically satisfied. Indeed, we find 
$$
a_1 \ast (a_2 \ast a_3) = ([g_1], f_1 + f_2 + f_3, h_1 + h_2 + h_3) = (a_1\ast a_2) \ast a_3 \, ,
$$
and since, by composition, $[g_3] = \mathcal{D}_{(f_2,h_2)}([g_2]) = (\mathcal{D}_{(f_2,h_2)} \circ \mathcal{D}_{(f_1,h_1)}) ([g_1])$, recalling the identification $[g_i] \equiv ([{\rm e}^{\xi_i}],\mu_i ^{-1})$ we have that 
$$\mu_1(r) + (h_1 + h_2 + h_3)(r) = (\mu_1(r) + h_1(r) + h_2(r)) + h_3(r) = \mu_3(r) + h_3(r) > 0 \, , \quad \forall r \in I \, .$$

\noindent {\em Cancellation property:}
let $a = ([g_a], f_a,h_a) \in G$, $b = ([g_b], f_b,h_b) \in G$, and $a \ast b$ be defined. Then, taking advantage of the associativity property we have that
\begin{equation*}
\begin{split}
(a \ast b) \ast b^{-1} &= ([g_a], f_a+f_b-f_b, h_a+h_b-h_b) = ([g_a], f_a, h_a) = a \, , \\
a^{-1} \ast (a \ast b) &= ([g_b], -f_a+f_a+f_b, -h_a+h_a+h_b) = ([g_b], f_b, h_b) = b \, .
\end{split}
\end{equation*}
\end{proof}

\begin{remark}
Distinct elements $([g],f,h)$ and $([g],f+c,h)$ of $G$, with 
$c\in\mathbb{R}\setminus\{0\}$, yield the same deformed metric class, since
$[{\rm e}^{\xi+f+c}]=[{\rm e}^{\xi+f}]$. This does not conflict with the definition of a groupoid.
\end{remark}

This result provides an algebraic characterisation of the geometric deformation part of the MGD and EGD formalism. 

\section{Geometric Deformations as Disformal Transformations}
\label{Sec-DT}
Let $\psi \in C^\infty (M,\mathbb{R})$ be a scalar field. A {\em disformal transformation} \cite{Bekenstein:1992pj, Zumalacarregui:2013pma, Faraoni:2021gdl, BenAchour:2024hbg} is a local redefinition of the spacetime metric tensor field $g$ such that
$$
g \mapsto \tilde{g} = C(\psi,X) \, g + D (\psi,X) \, {\rm d}\psi \otimes {\rm d} \psi \, ,
$$
or in the traditional tensor-component representation, as
$$
g_{\mu\nu} \mapsto \tilde{g}_{\mu\nu} = C(\psi, X) \, g_{\mu\nu} + D(\psi, X) \nabla_\mu \psi \nabla_\nu \psi \, ,
$$
where $X := -(1/2) \, g(\nabla\psi, \nabla\psi)$ denotes the canonical kinetic term of the scalar field $\psi$.

These transformations have recently proven to be valuable solution-generating tools in modified gravity (see, e.g.,~\cite{Zumalacarregui:2013pma, Faraoni:2021gdl, BenAchour:2020wiw}). Hence, it is natural to investigate whether a connection exists with MGD and EGD.

Under the assumptions of staticity and spherical symmetry, together with $M \cong \mathbb{R} \times I \times \mathbb{S}^2$ and $\nabla r \neq 0$ on $M$, we have that the spacetime $(M,g)$ admits a canonical representative for $g$ that reads as in \eqref{g-seed}. The fundamental question therefore becomes: there exists a disformal transformation that outputs the canonical representative $\hat{g}$ of the MGD/EGD deformed metric?

First of all, since both MGD and EGD preserve the area of the orbits of spherical symmetry, we have that the term $r^2 \, g_{\mathbb{S}^2}$ must be left unaltered by the candidate disformal transformation. This means that it must be $C(\psi, X) = 1$. Furthermore, the transformation must yield a spherically symmetric spacetime, hence $\psi = \psi (t,r)$ wherever $D\neq0$. Then, component-wise the problem reduces to 
$$
\hat{g}_{\mu\nu} = g_{\mu\nu} + D(\psi, X) \, \partial_\mu \psi \, \partial_\nu \psi \, ,
$$
namely
\begin{eqnarray}
   - {\rm e}^{\xi(r) + f(r)} &=& \hat{g}_{tt} = g_{tt} + D \, (\partial_t \psi)^2 = -{\rm e}^{\xi(r)} + D \, \dot{\psi}^2 \, , \label{eq-1} \\
   \frac{1}{\mu (r) +h (r)} &=& \hat{g}_{rr} = g_{rr} + D \, (\partial_r \psi)^2 = \frac{1}{\mu(r)} + D \, (\psi')^2 \, , \label{eq-2}\\
   0 &=& \hat{g}_{tr} = g_{tr} + D \, \partial_t \psi \, \partial_r \psi  =  D \, \dot{\psi} \psi' \, . \label{eq-3}
\end{eqnarray}
Starting from the latter equation we have that it must hold pointwise that either $D = 0$ or $\dot{\psi} = 0$ or $\psi' = 0$ (unless we consider the possibility for a further time reparametrization to remove the additional off-diagonal $(t,r)$-term). 

First, $D = 0$ corresponds to the deformationless case, and hence it is trivial. 

If $\dot{\psi} = 0$, then Eq.~\eqref{eq-1} can only hold if $f\equiv 0$, which corresponds to the MGD scenario, provided that
\begin{equation}
\begin{split}
D(\psi, X) &= \frac{1}{[\psi '(r)]^2} \left(\frac{1}{\mu (r) +h (r)} - \frac{1}{\mu (r)} \right) = - \frac{\mu (r)}{2X} \left(\frac{1}{\mu (r) +h (r)} - \frac{1}{\mu (r)} \right)\\
&= - \frac{1}{2X} \left(\frac{\mu (r)}{\mu (r) +h (r)} - 1 \right) \, ,
\end{split}
\end{equation}
where we used the fact that $-2X = g^{\mu \nu} \partial_\mu \psi \partial_\nu \psi = g^{rr} (\psi')^2 = \mu \, (\psi')^2$ if $\psi = \psi (r)$. Note that if we assume $\psi'(r) \neq 0$ on $I$, then it admits a smooth inverse $r = r(\psi)$, hence making $D$ a function only of $\psi$ and $X$, as required. The (local) invertibility of the disformal map requires \cite{Takahashi:2021ttd} (note that the reference uses a different definition of $X$)
$$
C(C-2XD)(C-XC_X+2X^2D_X)\neq0 \, ,
$$
that in this case ($C=1$) reduces to $1-2XD\neq0$ and $1+2X^2D_X\neq0$, for which we have that
$$
1-2XD = 1+2X^2D_X = \frac{\mu}{\mu +h } \bigg|_{r(\psi)} > 0 \, , 
$$
and hence the (local) invertibility condition is satisfied. 

If $\psi'=0$, then Eq.~\eqref{eq-2} is satisfied only for $h \equiv 0$, thus it can never reproduce non-trivial MGD scenarios, however it can introduce anomalous redshift terms within the EGD approach (${\rm e}^{f(r)}$ cannot simply be absorbed in a redefinition of time since that procedure would lead to the appearance of an off-diagonal term in the metric, indeed ${\rm e}^{f(r)/2} \, {\rm d} t$ is not exact).

Let us also briefly investigate the case in which we allow for a global reparametrization on the base manifold $Q = \mathbb{R} \times I$. Hence, in this case we are studying if EGD can be seen as a composition of a disformal transformation and a diffeomorphism on $Q$.

Let $\tilde{g}$ be the intermediate metric obtained by disforming the seed. This implies the set of equations
\begin{eqnarray}
  \tilde{g}_{tt} &=& g_{tt} + D \, (\partial_t \psi)^2 = -{\rm e}^{\xi(r)} + D \, (\dot{\psi})^2 \, , \label{eq-11} \\
   \tilde{g}_{rr} &=& g_{rr} + D \, (\partial_r \psi)^2 = \frac{1}{\mu(r)} + D \, (\psi')^2 \, , \label{eq-22}\\
   \tilde{g}_{tr} &=& g_{tr} + D \, \partial_t \psi \, \partial_r \psi  =  D \, \dot{\psi} \psi' \, . \label{eq-33}
\end{eqnarray}
As expected, a generic disformal transformation that preserves the area of the orbits of spherical symmetry will, in general, introduce off-diagonal terms in the metric. The standard procedure to eliminate such terms requires to perform a transformation such that $t \to \hat{t} = t + \sigma (r)$, with $\sigma (r)$ a function to be determined. Since $\partial_t \hat{t} = 1$, then $\dot{\psi} =\partial_t \psi (t,r) = \partial_{\hat{t}} \psi (t(\hat{t}),r)$. By means of this transformation we have that
\begin{equation}
    \begin{split}
    &\tilde{g}_{tt} \, {\rm d}t \otimes {\rm d}t + \tilde{g}_{tr} \left( {\rm d}{t} \otimes {\rm d} r + {\rm d}{r} \otimes {\rm d} t \right) + \tilde{g}_{rr} \, {\rm d}r \otimes {\rm d}r\\ 
    &=\tilde g_{tt}\,{\rm d}\hat{t}\otimes {\rm d} \hat{t}+\big(\tilde{g}_{tr}-\tilde{g}_{tt}\, \sigma'\big)({\rm d}\hat{t}\otimes{\rm d}r + {\rm d} r\otimes {\rm d}\hat{t})+\big[\tilde{g}_{rr}-2\tilde{g}_{tr} \, \sigma' +\tilde{g}_{tt}(\sigma  ')^2\big]\,{\rm d}r\otimes {\rm d}r \, .
    \end{split}
\end{equation}
To eliminate the off-diagonal term we must choose $\sigma (r)$ such that
\begin{equation}
    \sigma'(r) = \frac{\tilde{g}_{tr}}{\tilde{g}_{tt}} \, ,
\end{equation}
that implies
\begin{equation}
    \begin{split}
    \tilde{g}_{tt} \, {\rm d}t \otimes {\rm d}t + \tilde{g}_{tr} \left( {\rm d}{t} \otimes {\rm d} r + {\rm d}{r} \otimes {\rm d} t \right) + \tilde{g}_{rr} \, {\rm d}r \otimes {\rm d}r 
    =\tilde g_{tt}\,{\rm d}\hat{t}\otimes {\rm d} \hat{t}+\bigg(\tilde{g}_{rr}-\frac{\tilde{g}_{tr}^2}{\tilde{g}_{tt}} \bigg)\,{\rm d}r\otimes {\rm d}r \, .
    \end{split}
\end{equation}
We now want the transformed metric to equal the deformed one obtained {\em via} the EGD procedure. In other words, we have to check if there exists an admissible disformal map such that
$$
\hat{g} = \tilde g_{tt}\,{\rm d}\hat{t}\otimes {\rm d} \hat{t}+\bigg(\tilde{g}_{rr}-\frac{\tilde{g}_{tr}^2}{\tilde{g}_{tt}} \bigg)\,{\rm d}r\otimes {\rm d}r + r^2 \, g_{\mathbb{S}^2} \, .
$$
Namely, we have to verify if we can enforce the following conditions simultaneously, {\em i.e.}
\begin{eqnarray}
   - {\rm e}^{\xi(r) + f(r)} &=&  -{\rm e}^{\xi(r)} + D \, (\dot{\psi})^{2} \, , \label{eq-13} \\
   \frac{1}{\mu (r) +h (r)} &=& \tilde{g}_{rr}-\frac{\tilde{g}_{tr}^2}{\tilde{g}_{tt}} =  \frac{1}{\mu(r)} + D \, (\psi')^2 -\frac{(D \, \dot{\psi} \psi')^2 }{-{\rm e}^{\xi(r)} + D \, (\dot{\psi})^{2}}  \label{eq-23} \, ,
\end{eqnarray}
namely
\begin{eqnarray}
D \, (\dot{\psi})^{2} &=& e^{\xi}\big(1-e^{f}\big) \, , \label{eq-14} \\
\frac{1}{\mu +h} &=& \frac{1}{\mu} + D (\psi ')^2 \, {\rm e}^{-f} \, , \label{eq-24}
\end{eqnarray}
where we inserted Eq.~\eqref{eq-14} into Eq.~\eqref{eq-23} to obtain Eq.~\eqref{eq-24}. Hence we find
\begin{eqnarray}
D \, (\dot{\psi})^{2} &=& e^{\xi}\big(1-e^{f}\big) \, , \label{eq-15} \\
D \, (\psi')^2 &=& -\frac{e^f \, h}{\mu \,(\mu+h)} \, . \label{eq-25}
\end{eqnarray}
This implies 
\begin{equation}
    \label{product}
    (D \, \dot{\psi} \psi')^2 = \frac{e^{\xi+f} \, (e^f-1) \,h}{\mu \, (\mu+h)} \, .
\end{equation}
Since $\mu>0$ and $\mu+h>0$, the non-negativity of $(D \, \dot{\psi} \, \psi')^2$ requires
\begin{equation}
\label{condition}
f(r) \, h(r) \geq0 \, , \qquad \forall r\in I.
\end{equation}
The inequality is strict wherever $D \, \dot{\psi} \, \psi' \neq 0$. This necessary condition imposes an additional constraint beyond the EGD admissibility conditions outlined in Remark~\ref{remark3}. Consequently, geometrically admissible EGD configurations for which $f \, h <0$ somewhere on $I$ cannot be obtained through a purely disformal transformation followed by a coordinate transformation of the form $\hat{t}=t+\sigma(r)$. 

Since EGD preserves the areal radius and the static Killing field of the seed, we seek a representation as a composition of a disformal transformation and a diffeomorphism compatible with this shared structure. Such diffeomorphisms, that preserve the areal radius and identify the static Killing fields up to a constant normalization, are exhausted by $\hat{t} = \alpha t + \sigma(r)$, $\hat{r}=r$, where $\alpha\neq0$ is a constant. Proceeding as above, under this more general transformation the condition \eqref{condition} becomes $(\alpha^2 \, e^f-1) \, h \geq 0$, $\forall r \in I$.

Interestingly, we can provide a straightforward example to solidify our analysis.
\begin{example}
Let us consider $(\mathbb{M},\eta)$ the Minkowski spacetime as seed of an EGD procedure, {\em i.e.}
$$
\mathbb{M} = \mathbb{R} \times (0, + \infty) \times  \mathbb{S}^2 \, , \qquad 
\eta =  - \, {\rm d}t \otimes {\rm d}t +   {\rm d}r \otimes {\rm d}r + r^2 \, g_{\mathbb{S}^2} \, ,
$$
hence $\xi = 0$ and $\mu = 1$. The two functions
$$
f(r) =-\log (r/\ell) \, , \qquad h(r) = 1 , \qquad \ell>0 \, ,
$$
are such that $(f,h) \in \mathcal{A}_\eta$, and therefore are  admissible deformations for the EGD program. The algorithm then yields 
$$
\hat g= -\frac{\ell}{r}\,{\rm d} t^2+\frac12\,{\rm d}r^2+r^2g_{\mathbb S^2}
$$
which is smooth and Lorentzian $\forall r > 0$. However, considering the most general compatibility condition $(\alpha^2 \, e^f-1) \, h \geq 0$, it yields $\alpha^2 \geq r/\ell$, $\forall r \in (0, +\infty)$. Hence, $\nexists \alpha \in \mathbb{R} \setminus \{ 0\}$ such that $\alpha^2 \geq r/\ell$, $\forall r \in (0, +\infty)$. Therefore, this specific EGD deformation of the seed cannot be obtained globally combining a disformal transformation and a diffeomorphism compatible with the required structure.
\end{example}

We can now summarize the main results of this section in the following statement.
\begin{theorem}
\label{theo-2}
The following hold:
\begin{itemize}
    \item[(i)] Every geometrically admissible MGD can be represented by a disformal transformation.
    \item[(ii)] In the same canonical coordinates, a purely disformal transformation cannot reproduce temporal and radial EGD deformations that are both non-zero at the same point.
    \item[(iii)] Not every geometrically admissible EGD admits a global disformal representation, even when composed with a diffeomorphism preserving the areal radius and identifying the static Killing fields up to a constant normalization.
\end{itemize}
\end{theorem}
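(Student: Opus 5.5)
The plan is to collect the computations of this section into a proof of the three items, one at a time.

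For (i), fix a seed $[g]\equiv([{\rm e}^{\xi}],\mu^{-1})$ and an admissible MGD, i.e. $(0,h)\in\mathcal{A}_g$, so $\mu+h>0$ on $I$. Choose $\psi:=r$, the areal radius, as the auxiliary scalar. This choice is global, smooth and static, and $\psi'=1\neq0$, so $r(\psi)$ is simply the identity. The kinetic term is then $X=-\mu(r)/2$, which is strictly negative. Set $C\equiv1$ and
\[
D(\psi,X)=\frac{1}{\mu+h}-\frac{1}{\mu}\bigg|_{r=\psi}.
\]
Equivalently, $D=-\frac{1}{2X}\big(\frac{\mu}{\mu+h}-1\big)$, which can be regarded as a function of $(\psi,X)$.

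With $\dot\psi=0$, Eqs.~\eqref{eq-1}--\eqref{eq-3} hold identically with $f=0$. The canonical representative $\hat g$ in \eqref{g-deformed} is therefore exactly $g+D\,{\rm d}r\otimes{\rm d}r$. For local invertibility I will quote the criterion of \cite{Takahashi:2021ttd}. In our case this reduces to $1-2XD=\mu/(\mu+h)$, which is positive by admissibility. Since the representative was canonical and $\Lambda$ is well defined, the result passes to classes.

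For (ii), work in the canonical chart with $C=1$, as forced by preservation of $r^2g_{\mathbb S^2}$. Take a point where $f\neq0$ and $h\neq0$. Equation~\eqref{eq-1} then requires $D\dot\psi^2=e^{\xi}(1-e^{f})\neq0$, and Eq.~\eqref{eq-2} requires $D\psi'^2\neq0$. Together these give $D\dot\psi\psi'\neq0$ at that point, which contradicts Eq.~\eqref{eq-3}. This proves (ii).

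For (iii), allow the compatible diffeomorphisms $\hat t=\alpha t+\sigma(r)$, $\hat r=r$. Redo the elimination of the off-diagonal term, now including $\alpha$. The $tt$ equation becomes $\alpha^{2}\tilde g_{tt}=-e^{\xi+f}$. The $rr$ equation, after substitution, gives $D\psi'^2$ in terms of $f$, $h$ and $\alpha$. Multiplying the two expressions yields $(D\dot\psi\psi')^2\propto(\alpha^2e^{f}-1)\,h/[\mu(\mu+h)]$, with a positive prefactor. Non-negativity then forces $(\alpha^2e^f-1)h\ge0$ on $I$.

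Finally, I will invoke the Minkowski Example above: with $f=-\log(r/\ell)$ and $h=1$ the deformation is admissible, yet the condition would require $\alpha^2\ge r/\ell$ for all $r>0$, which is impossible. This gives the counterexample.

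The main obstacle is making sure that the class $\hat t=\alpha t+\sigma(r)$ really exhausts the allowed diffeomorphisms. Preserving $r$ and mapping $\partial_t\mapsto\alpha^{-1}\partial_{\hat t}$ forces $\partial_t\hat t=\alpha$. Hence $\hat t=\alpha t+\sigma(r)$, and the angular part acts by isometry. One also has to track carefully the sign of the $\alpha$-dependent factor in the product identity.
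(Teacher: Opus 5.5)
Your proposal is correct and follows the paper's argument essentially step by step: for (i) you take $\psi=r$, $C=1$ and $D=1/(\mu+h)-1/\mu$; for (ii) you show that Eqs.~\eqref{eq-1}--\eqref{eq-3} are incompatible at any point where $f\neq0\neq h$; and for (iii) you derive the necessary condition $(\alpha^2 e^f-1)h\ge0$ and close with the Minkowski counterexample. One slip is harmless: with $\hat t=\alpha t+\sigma(r)$ one has $\partial_t=\alpha\,\partial_{\hat t}$, so the $tt$ matching should read $\tilde g_{tt}=-\alpha^2 e^{\xi+f}$ rather than $\alpha^2\tilde g_{tt}=-e^{\xi+f}$, but since $\alpha^2$ ranges over all of $(0,\infty)$ this only relabels $\alpha\to\alpha^{-1}$ and does not affect the counterexample.
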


\section{MGD as a Disformal Transformation}
\label{Sec-MGD-Dis}
For static, spherically symmetric cases, as the one of interest here, we have a possible privileged choice for the scalar field: $\psi \equiv r : M \to \mathbb{R}^+$.

Consider the disformal transformation
$$
\hat{g} = g + D \, {\rm d} r \otimes {\rm d} r \, , \qquad D(r) = \frac{1}{\mu (r)+h (r)} - \frac{1}{\mu (r)}  \, .
$$
Here $D$ is taken to depend only on $\psi=r$. Theorem~\ref{theo-2} guarantees that this transformation explicitly implements on $g$ the MGD approach corresponding to the pair $(0, h) \in \mathcal{A}_g$, {\em i.e.} $\hat{g}$ in \eqref{g-deformed} with $f=0$.

It is therefore not surprising to find that the effect of said disformal transformation onto the Einstein tensor yields
\begin{eqnarray}
    G^t{}_t[\hat{g}] - G^t{}_t[{g}]
    &=& \frac h{r^2}+\frac{h'}r \, , \\
    G^r{}_r[\hat{g}] - G^r{}_r[{g}]
    &=& h\left(\frac1{r^2}+\frac{\xi'}r\right) \, , \\
    G^\theta{}_\theta[\hat{g}] - G^\theta{}_\theta[{g}] &=& \frac h4
\left(2\xi''+(\xi')^2+\frac{2\xi'}r\right)+\frac{h'}4\left(\xi'+\frac2r\right) = G^\phi{}_\phi[\hat{g}] - G^\phi{}_\phi[{g}] \, ,
\end{eqnarray}
namely, the residual effective stress-energy tensor $G^\mu{}_\nu[\hat{g}] - G^\mu{}_\nu[{g}]$ coincides with the ``additional gravitational source'' stress-energy tensor $\theta ^\mu {}_\nu$ of the MGD approach (cf.~\cite[Eqs.~(21)--(23)]{MGD}).

Furthermore, denoting by $\hat{\nabla}$ the connection compatible with $\hat{g}$ 
and observing that (following a direct computation of the Christoffel symbols for both metrics and applying the definition for each covariant derivative)
$$
\hat{\nabla}_\nu G^\nu{}_\mu[g]
=\nabla_\nu G^\nu{}_\mu[g]=0
$$
we find that $\hat{\nabla}_\nu\theta^\nu{}_\mu = 0$ and, in particular
\begin{equation}
\hat{\nabla}_\nu\theta^\nu{}_r
=(\theta^r{}_r)'
+\frac{\xi'}2(\theta^r{}_r-\theta^t{}_t)
+\frac2r(\theta^r{}_r-\theta^\theta{}_\theta) = 0 \, ,
\end{equation}
{\em i.e.} the conservation equation in \cite[Eq.~(24)]{MGD}. This implies the absence of energy–momentum exchange between the seed source and the additional source, whose interaction is purely gravitational. Although the original MGD formulation assumes a perfect-fluid seed, pressure isotropy is not required for the separate conservation established here, consistently with the formulation allowing anisotropic seeds in \cite{Andrade:2023wux}. 
Note that, for a fixed tensor field in $\Gamma(TM\otimes T^*M)$
that is diagonal in the coordinate system in which the seed is given by the canonical representative and whose
components depend only on $r$, the expression for its
covariant divergence is independent of $\mu$ and hence unchanged under MGD.

\section{Conclusions}
The aim of this work is to stimulate further investigations into the mathematical foundations of gravitational decoupling and to encourage a systematic and rigorous treatment of its geometric and algebraic structures.

Within the specified global static class, gravitational decoupling admits
a precise description in terms of admissible partial transformations of metric classes. These transformations form a groupoid by composition.

The disformal formulation of MGD provides a geometric interpretation of its radial deformation and reproduces the familiar source and conservation equations
without requiring seed pressure isotropy. The situation is more involved for extended deformations. Indeed, an explicit counterexample shows that an admissible EGD might not admit a global scalar disformal representation compatible with the shared areal radius and static Killing direction.

Extensions of the presented analyses across horizons, critical points of the areal radius, and additional matter constraints remain outside the scope of the present work and are deferred to future investigation.

\bigskip
%
%
%
% ==========================================
\section*{Acknowledgements}
% ==========================================

\noindent The author is grateful to R.~Casadio and R.~da Rocha for useful discussions. 

\noindent A.G. is supported by the Italian Ministry of Universities and Research (MUR) through the grant ``BACHQ: Black Holes and The Quantum'' (grant no. J33C24003220006) and by the INFN project FLAG. This work has been carried out in the framework of activities of the National Group of Mathematical Physics (GNFM, INdAM). 

\medskip

\noindent {\bf AI Disclosure.} OpenAI's GPT--6 Astra (September 2026 version) was used during October 2026 for critical feedback, bibliographic assistance, and discussion of selected proof steps, including those in Appendix~\ref{AppA}; the author retains full responsibility for the manuscript.

\appendix

%========================================
\section{Lemmata} \label{AppA}
%=======================================

\begin{lemma} \label{lemmaccio}
Let $I\subseteq(0,\infty)$ be a nonempty open interval. If two canonical metrics
\[
 g_i=-A_i(r)\,{\rm d} t^2+B_i(r)\,{\rm d} r^2+r^2 \, g_{\mathbb{S}^2},
 \qquad A_i,B_i>0,\qquad i=1,2,
\]
on $M=\mathbb{R}\times I\times\mathbb{S}^2$
are such that $(M,g_1)$ and $(M,g_2)$ are globally isometric, then there exists an isometry identifying their
respective canonical spheres $\{t,r\}\times\mathbb{S}^2$.
\end{lemma}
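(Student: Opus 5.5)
Let $\Psi:(M,g_2)\to(M,g_1)$ be a global isometry, so $\Psi^*g_1=g_2$. The plan is to show that $\Psi$ itself maps the canonical spheres of $g_2$ onto those of $g_1$, composing with an explicit isometry of $g_1$ where this fails. The central idea is to characterize the canonical spheres intrinsically, so that any isometry must respect them.

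For each $g_i$, consider the connected component of the identity of the isometry group $\mathrm{Isom}(M,g_i)$. It contains the $\mathrm{SO}(3)$ acting on the $\mathbb{S}^2$ factor. Generically it also contains time translations $\partial_t$, though a Minkowski region, for example, has a larger group. In the generic case the $\mathrm{SO}(3)$-orbits are exactly the canonical spheres $\{t,r\}\times\mathbb{S}^2$, and each has area $4\pi r^2$. Since $\Psi$ conjugates $\mathrm{Isom}(M,g_2)$ onto $\mathrm{Isom}(M,g_1)$, it maps $\mathrm{SO}(3)$-orbits to orbits of the conjugated subgroup $\Psi\,\mathrm{SO}(3)\,\Psi^{-1}$.

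To avoid classifying isometry groups, I would use a more local invariant. The canonical spheres are the only round spacelike $2$-spheres $S$ with the following two properties: the Killing fields tangent to $S$ form an $\mathfrak{so}(3)$ acting transitively on $S$, and the area of $S$ matches the value $4\pi r^2$ of the areal function. The areal function is itself defined through these orbits, which makes this characterization partly circular unless the isometry algebra is exactly $\mathfrak{so}(3)\oplus\mathbb{R}$. The argument therefore splits into two cases.

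\textbf{Generic case.} Suppose $\mathfrak{isom}(g_i)=\mathfrak{so}(3)\oplus\mathbb{R}\,\partial_t$. Then $\mathfrak{so}(3)$ is the unique subalgebra isomorphic to $\mathfrak{so}(3)$, because the Levi factor is unique up to conjugation and here $\mathfrak{so}(3)$ is an ideal. Consequently $\Psi_*$ maps it to itself, and $\Psi$ sends orbits to orbits, i.e.\ canonical spheres to canonical spheres. Since area is preserved, $r_1\circ\Psi=r_2$.

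\textbf{Enhanced symmetry case.} Here the isometry algebra is larger, as for regions of Minkowski, de Sitter, or Nariai-type spacetimes. Constant curvature occurs when all curvature invariants are constant. Since $\mathrm{d}r\neq0$, the Bertotti--Robinson-type products are excluded, so enhancement happens only when $(M,g_i)$ is locally of constant curvature. In that case I would replace $\Psi$ by $\Psi\circ\chi$, where $\chi$ is an explicit isometry of $g_2$ chosen to align the two $\mathrm{SO}(3)$ subgroups. Such a $\chi$ exists because all $\mathfrak{so}(3)$ subalgebras of $\mathfrak{so}(4,1)$, $\mathfrak{so}(3,2)$ and the Poincaré algebra with $2$-sphere orbits are conjugate. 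The lemma only asserts that \emph{some} isometry identifies the spheres, which is exactly why this modification is allowed. The global part of the step needs care: the conjugating element must act on $M$ itself and not merely on its universal model, and I would check this using the product structure $\mathbb{R}\times I\times\mathbb{S}^2$.

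Once canonical spheres map to canonical spheres, the sphere $\{t_2,r_2\}\times\mathbb{S}^2$ is sent to $\{T,\,r_2\}\times\mathbb{S}^2$. Here $T=T(t_2,r_2)$, the restriction to each sphere is an isometry of round spheres (an element of $\mathrm{O}(3)$), and the area condition fixes $r_1\circ\Psi=r_2$.

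The mixed-component argument, which the proof of the bijectivity proposition refers to, then works as follows. Pulling back $g_1$ and requiring that no ${\rm d}t\,{\rm d}\omega$ or ${\rm d}r\,{\rm d}\omega$ cross terms appear forces the $\mathrm{O}(3)$ element $R$ to be independent of $(t,r)$. This holds because the orthogonal complement of the spheres, spanned by $\partial_t$ and $\partial_r$, must be preserved, and any $(t,r)$-dependence of $R$ produces terms $r^2\,\partial_a R$ that cannot vanish. Composing with a reflection if necessary then gives $R\in\mathrm{SO}(3)$.

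I expect the enhanced-symmetry case to be the main obstacle, specifically the global existence of the aligning isometry $\chi$. The generic case follows cleanly from Lie-algebraic uniqueness.
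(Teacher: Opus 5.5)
There is a genuine gap, and it sits in the enhanced-symmetry case, as you suspected.

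First, your reduction of that case to local constant curvature is false. The hemisphere of the Einstein static universe ($A=1$, $B=(1-r^2/a^2)^{-1}$, $I=(0,a)$) and the static open universe ($A=1$, $B=(1+r^2/a^2)^{-1}$, $I=(0,\infty)$) both belong to the lemma's class with ${\rm d}r\neq0$. Their Killing algebras are $\mathbb{R}\oplus\mathfrak{so}(4)$ and $\mathbb{R}\oplus\mathfrak{so}(3,1)$, and neither metric has constant curvature. Both are homogeneous, so all their curvature invariants are constant; the criterion you invoke fails as well.

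If by $\mathfrak{isom}(g_i)$ you mean instead the Lie algebra of the global group ${\rm Isom}(M,g_i)$, these examples fall back into your generic case, because the extra Killing fields are incomplete on $M$. You would then still have to determine when that global algebra exceeds $\mathfrak{so}(3)\oplus\mathbb{R}$, and you do not.

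Second, even in the constant-curvature case, conjugacy of $\mathfrak{so}(3)$ subalgebras in $\mathfrak{so}(4,1)$, $\mathfrak{so}(3,2)$ or the Poincar\'e algebra only gives a conjugating element of the ambient model group. What you need is an element of ${\rm Isom}(M,g_1)$ carrying $\Psi H_2\Psi^{-1}$ onto $H_1$. An ambient element need not preserve the region $M$, and this is exactly the step you leave unchecked.

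The missing idea, which the paper uses, removes the case split entirely. Show that the canonical rotation group $H$ is a \emph{maximal compact} subgroup of ${\rm Isom}_0(M,g)$, whatever that group is. Then apply conjugacy of maximal compact subgroups in a connected Lie group. Conjugation by $\Psi$ is an isomorphism ${\rm Isom}_0(M,g_2)\to{\rm Isom}_0(M,g_1)$, so $H_1$ and $\Psi H_2\Psi^{-1}$ are both maximal compact in ${\rm Isom}_0(M,g_1)$. Hence some $\chi\in{\rm Isom}_0(M,g_1)$ conjugates one onto the other, and $\chi\circ\Psi$ sends canonical spheres to canonical spheres. This $\chi$ is automatically a global isometry of $M$, which settles your worry about the universal model.

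Maximality follows from an orbit argument:
\begin{itemize}
\item Take a maximal compact $L\supseteq H$, which is connected. An orbit $N=L\cdot p$ is compact, so $t|_N$ attains a maximum at some $q$ with $T_qN\subseteq\ker{\rm d}t$, and $\ker{\rm d}t$ is spacelike.
\item By homogeneity $N$ is spacelike everywhere, so the projection $\pi$ dropping $t$ is an immersion on $N$.
\item If $\dim N=3$, then $\pi(N)$ would be open and compact in the connected, non-compact $I\times\mathbb{S}^2$, which is impossible. Hence $N=H\cdot p$.
\item Your own mixed-component computation then forces $L=H$.
\end{itemize}
Your generic-case argument is correct, since there $\mathfrak{so}(3)=[\mathfrak{g},\mathfrak{g}]$, but this route makes it unnecessary.
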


\begin{proof}
Let $\phi:(M,g_2)\to(M,g_1)$ be a global isometry, so that $\phi^*g_1=g_2$.
Denote by $H_i$ the canonical rotation group, regarded as a subgroup of
${\rm Isom}_0(M,g_i)$ (the component of  the isometry group connected with the identity). The orbits of $H_i$ are precisely the canonical spheres.

We use the following result for connected Lie groups (see \cite[Ch.~XV, Theorem~3.1]{Hochschild} and \cite[Theorem A--C]{Hofmann:1994}): every compact
subgroup is contained in a maximal compact subgroup; maximal compact
subgroups are connected and conjugate.

\medskip

\noindent {\em Step 1: The canonical rotation group is maximal compact.}

Fix one of the metrics and denote it by $g$, and denote
its canonical rotation group by $H$. Let $L$ be a maximal compact
subgroup of ${\rm Isom}_0(M,g)$ containing $H$. By the theorem above,
$L$ is connected. Every orbit $N=L\cdot p=\{\ell(p):\ell\in L\}\subseteq M$, with $p \in M$, is a compact, connected,
embedded submanifold without boundary (a compact group action is proper,
and the orbits of a proper action are embedded
\cite[\S3.2.1 and Proposition~3.41]{AlexandrinoBettiol}).

At a point $q \in N$ where $t|_N$ attains its maximum (it must exist since $t|_N:N\to\mathbb \mathbb{R}$ is continuous on a compact $N$) we have ${\rm d} (t|_N)_q = 0$, hence $\forall v \in T_q N$ it holds that ${\rm d} (t|_N)_q (v) = 0$, and hence $T_q N\subseteq \ker({\rm d}t)_q$.

Since $\ker({\rm d}t)$ is spacelike, $N$ is spacelike at $q \in N$ and hence everywhere, by transitivity of the isometric action of $L$ (being $N$ an orbit of $L$). Moreover, it contains the
sphere $H \cdot p$, so $2\leq \dim (N)\leq 3$.

Observe that $\dim (N)= 3$ is impossible. Indeed, consider the projection $\pi:M \to I\times \mathbb{S}^2$ such that $\pi (p) =\pi(t,r,\omega) =(r,\omega)$. Then $\forall v \in T_p M$ can be written as
$v= \alpha \, \partial_t+ \beta \,\partial_r+X$, with $\alpha,\beta \in \mathbb{R}$ and $X\in T_\omega\mathbb S^2$. The differential then maps tangent vectors into tangent vectors, {\em i.e.}
${\rm  d} \pi_p : T_p M \to T_{(r,\omega)} (I\times \mathbb{S}^2)$ such that ${\rm  d} \pi_p (v) = \beta \,\partial_r+X$ (it cancels the contribution of $\partial_t$). Hence, $\ker ({\rm  d} \pi_p) = \mathbb{R} \, \partial_t$, which implies $\ker\big(d(\pi|_N)_p\big)
=T_pN\cap\mathbb R\,\partial_t
=\{0\}$. Hence $d(\pi|_N)$ is injective, which implies that $\dim (T_p N) \leq \dim (T_{(r,\omega)} (I\times \mathbb{S}^2))=3$. If we assume that $\dim (N)= 3$, then $\dim (T_p N) = 3$, and, as $N$ has no boundary, the inverse function theorem implies that $\pi|_N$ is a local diffeomorphism. Thus $\pi(N)$ would be nonempty, open and compact (image of a compact set through a continuous function), hence also closed (being $I\times \mathbb{S}^2$ Hausdorff and $\pi(N)$ compact), in the connected,
non-compact manifold $I\times\mathbb{S}^2$. But in a connected space the only subsets that are both open and closed are $\emptyset$ and the whole space. Hence, since  $\pi(N) \neq \emptyset$, it must be $\pi(N) = I\times\mathbb{S}^2$. Yet $\pi(N)$ is compact whereas, $I\times\mathbb{S}^2$ is non-compact. We reached a contradiction. Therefore it must be $\dim (N)= 2$.

Since $H\cdot p\subseteq N$ and both are embedded submanifolds of $M$, the inclusion $\iota:H\cdot p\hookrightarrow N$ is smooth. For every $x\in H\cdot p$, we have $T_x(H\cdot p)\subseteq T_xN$. Both tangent spaces have dimension two, so they coincide. Thus, ${\rm d}\iota_x$ is an isomorphism. By the inverse
function theorem, $\iota$ is a local diffeomorphism, and hence each $x\in H\cdot p$ has an open neighborhood in $N$
contained in $H\cdot p$. Hence $H\cdot p$ is open in $N$. Since $N$ is an embedded manifold, it is Hausdorff, hence since $H\cdot p$ is compact in $N$, then $H\cdot p$ is closed. Furthermore, since $N$ is connected and $H\cdot p \neq \emptyset$ is both open and closed in $N$, it must be $N = H\cdot p$. Namely, $L \cdot p = H \cdot p$, $\forall p \in M$.

Every $\ell\in L$ therefore preserves each sphere and has the form $\ell(t,r,\omega)=(t,r,R(t,r) \, \omega)$, with $R(t,r)\in {\rm SO}(3)$. The operator $R$ depends smoothly on $(t,r)$ and connectedness of $L$ excludes
reflections. 

For $a \in \{t,r\}$ and $X\in T_\omega\mathbb{S}^2$, the mixed terms
in $\ell^*g=g$ ($\ell \in L \subset {\rm Isom}_0(M,g_i)$) give
$$
g(\partial_a,X)=0 \quad \Rightarrow \quad
0= (\ell^*g)(\partial_a,X) = g\bigl({\rm d}\ell(\partial_a),{\rm d} \ell(X)\bigr)
=r^2 \, g_{\mathbb{S}^2} \bigl((\partial_a R)\omega,RX\bigr) \, .
$$
Indeed, working at $p \in M$, for $a = t$, consider the curve $\gamma (\lambda) = (t+\lambda,r,\omega)$ e $\gamma'(0)=\partial_t|_p$. Then $\ell(\gamma(\lambda))=(t+\lambda,r,R(t+\lambda,r)\omega)$, hence computing the derivative at $\lambda = 0$ yields $
{\rm d}\ell_p(\partial_t|_p)
=\partial_t|_{\ell(p)}+(\partial_tR)(t,r)\omega$. The same applies to the other vectors, and then the result follows from simple algebra.

The vector $(\partial_a R)\omega$ is tangent to the sphere at $R(t,r)\omega$. Indeed, since $R(t,r)$ is orthogonal
and $|\omega|=1$ (Euclidean norm), we have $|R(t,r)\omega|^2=1$. Differentiating with respect to $a\in\{t,r\}$ while keeping $\omega$ fixed gives
that $(\partial_aR)\omega$ is orthogonal to the radial vector $R\omega$, and therefore belongs to $T_{R\omega}\mathbb S^2$. Furthermore, $RX$ ranges
over the entire tangent plane. Hence we have that $g((\partial_a R) \, \omega, v) = 0$, $\forall p=(t,r,\omega)\in M$ and
$\forall v\in T_{R\omega}\mathbb{S}^2$, meaning that $(\partial_aR)\omega=0$, $\forall \omega \in \mathbb{S}^2$, thus $\partial_aR=0$. As a consequence $R$ is constant ({\em i.e.} canonical ${\rm SO}(3)$ rotations), implying that $\ell\in H$, and therefore $L=H$. 

\medskip

\noindent {\em Step 2: The required isometry follows from conjugacy.}

Conjugation by $\phi$ induces a Lie group isomorphism ${\rm Isom}_0(M,g_2)\rightarrow{\rm Isom}_0(M,g_1)$, with $h\mapsto\phi\circ h\circ\phi^{-1}$. Indeed, if $h^*g_2=g_2$, then
$$
(\phi\circ h\circ\phi^{-1})^*g_1
=(\phi^{-1})^*h^*\phi^*g_1
=(\phi^{-1})^*g_2=g_1.
$$
Thus conjugation by $\phi$ is a group isomorphism, with
inverse given by conjugation by $\phi^{-1}$. Since it is
a homeomorphism fixing the identity, it maps the identity components onto one another.

By the first step,
$H_1$ and $\phi H_2\phi^{-1}$ are both maximal compact subgroups of ${\rm Isom}_0(M,g_1)$. By Theorem~3.1 in \cite[Ch.~XV]{Hochschild}, $\exists \chi\in {\rm Isom}_0(M,g_1)$ such that
$$
 \chi \, \phi \, H_2 \, \phi^{-1} \, \chi^{-1} = H_1 \, .
$$
Setting $\Phi=\chi\circ\phi$, we still have 
$$
\Phi^*g_1
=(\chi\circ\phi)^*g_1
=\phi^*(\chi^*g_1)
=\phi^*g_1
=g_2 \, ,
$$
and
\begin{equation}
\begin{split}
\Phi(H_2\cdot p)
&=\{\Phi(h(p)):h\in H_2\}
=\{(\Phi h\Phi^{-1})(\Phi(p)):h\in H_2\} =\{k(\Phi(p)):k\in H_1\}\\
&=H_1\cdot\Phi(p) \, ,
\end{split}    
\end{equation}
since $\Phi \, H_2 \,\Phi^{-1}=H_1$. Thus $\Phi$ maps each canonical sphere of the second metric onto a canonical sphere of the first, as required.
\end{proof}

%\bigskip
\newpage

% ========================================
\bibliographystyle{utphys}
\bibliography{references}
% ========================================
\end{document}